\documentclass[aps,pra,9pt,twocolumn,showpacs,superscriptaddress]{revtex4-2}
\usepackage{physics}
\usepackage{graphicx}
\usepackage[colorlinks=true,linkcolor=red,citecolor=blue,plainpages=false,pdfpagelabels]{hyperref}
\usepackage{times}
\usepackage{amssymb}
\usepackage{amsthm}
\newtheorem{theorem}{Theorem}
\newtheorem{definition}{Definition}

\usepackage{mathtools}
\usepackage{amsmath}
\usepackage{csquotes}

\begin{document}
\title{No-masking theorem for observables}

\author{Swapnil Bhowmick}
\author{Abhay Srivastav}\email{abhaysrivastav.hri@gmail.com}
\author{Arun Kumar Pati}

\affiliation{ Quantum Information and Computation Group,\\
Harish-Chandra Research Institute, A CI of Homi Bhabha National Institute, Chhatnag Road, Jhunsi, Prayagraj 211019, India
}

\begin{abstract}
The no-masking theorem for quantum information proves that it is impossible to encode an arbitrary input state into a larger bipartite entangled state such that the full information is stored in the correlation but the individual subsystems have no information about the input state. Here, we ask the question: Is it possible to mask an observable such that the information about the observable is available in the joint system, but individual subsystems reveal nothing about the imprints of the observable? This generalizes the notion of masking to observables. We show that a universal unitary that can mask an arbitrary observable in any dimension does not exist. For a qubit system, we show that the masking operation for a given observable is locally unitarily connected to the SWAP operation. This suggests a conservation law for information content of observables that goes beyond the conservation laws under symmetry operations. Furthermore, we prove that the unconditional no-bit commitment result follows from the no-masking theorem for observables. Our results can have important applications in quantum information and quantum communication where we encode information not in states but in observables.
\end{abstract}

\maketitle


\section{ Introduction} The linear and unitary nature of dynamical evolution in quantum mechanics places several restrictions on the information theoretic tasks that can be performed. These restrictions have been formulated using several no-go theorems such as no-cloning \cite{wootters1982}, no-broadcasting \cite{barnum1996}, no-deleting \cite{akpati}, no-hiding \cite{braunstein} and no-masking \cite{modi2018} theorems. However, most of these theorems are formulated so that the quantum information is encoded in the states. In quantum theory, the states evolve according to the Schr\"odinger equation while the observables evolve according to the Heisenberg equation of motion. However, not much is known about what kind of limitations are imposed on the manipulation of observables when they evolve in time. It is, therefore, natural and fundamental to ask whether we can formulate no-go theorems for observables. In Ref.~\cite{ferraro2006} the no-cloning theorem was formulated for information encoded in the statistics of observables.

The notion of masking of quantum information was introduced in Ref.~\cite{modi2018}. 
Traditionally, information is stored in physical systems which may be classical or quantum. The process of masking is to investigate an alternative way of storing information where it is stored in the quantum correlations among two or more systems, rather than in the systems themselves. This allows one to make quantum information inaccessible to everyone locally. 
The physical intuition of masking is that given a distinguishable set of objects, we encode their information content in non-local correlations in such a way that they become locally indistinguishable. In the case of states, the masker maps a set of states to a set of entangled states whose marginal states are equal for all inputs.
It was proved in Ref.~\cite{modi2018} that an arbitrary pure state of a qubit or qudit cannot be masked in the quantum correlation of a bipartite system. Subsequently, the notion of masking was discussed in the multipartite scenario \cite{li2018,Han_2020,shi2021,shang2021,hu2022}, for some restricted sets of states \cite{liang2019,ding2020,sun2021,du2021,zhu2021}, and also for non-Hermitian systems \cite{Lv_2022}. Experimental realization of the masking of quantum information for some restricted sets of states was done in Refs. \cite{liu2021,zhang2021}. The notion of masking of quantum information by non-unitary processes was discussed in Refs. \cite{ming2019,liang2020,modi2020,lie2020,choi2021,abhay2022}. Recently, the notion of work masking was introduced in Ref.~\cite{parashar} where it was shown that the work content of a state cannot be masked by an energy preserving unitary.

In this paper, we explore if it is possible to mask an arbitrary observable for a quantum system. Given a quantum system it has some physical attributes such as the energy, spin, angular momentum, and so on. Typically, we always imagine that the physical property of a system is akin to the system. What we would like to ask is whether the physical properties of a system can be made blind to the system itself, i.e., under some general transformation we map the observable to identity. When we represent an observable by identity it has nothing in it.
We define the notion of masking of observables and show that masking a set of arbitrary observables is not possible in any dimension $d$. However, we find that it is possible to mask a known observable in $d=2$ by choosing an appropriate unitary. Surprisingly, this is not the case in arbitrary $d$ where we find some observables that cannot be masked. Moreover, we find surprisingly that for a qubit system, the masked observable can be retrieved from the environment itself by an observer who has knowledge of the global evolution. This points to a fundamental notion of conservation of physical observable beyond what is taught in textbooks. 
As an application of our results, we show that the impossibility of unconditional bit commitment (see Refs. \cite{brassard_1993,Lo_1997,Mayers_1997,brassard_1997,bub_2001,ariano_2007,Lie2021quantumonetime,kent_2015} and references therein) can be understood as a necessary consequence of the no-masking theorem for observables. Our results can have deep impact and important applications in thermalization of observables, information scrambling as well as variety of other areas in quantum information and quantum communication.

\section{Preliminaries}
Let $\{{\lambda}_i\}$ be the $(d^2-1)$ generators of $SU(d)$ with the following properties: (i) ${\lambda}_i={\lambda}^\dag_i$, (ii) $\Tr (\lambda_i)=0$ and (iii) $\Tr (\lambda_i \lambda_j)=2\delta_{ij}$.  We can construct these operators using the following prescription, with $\{\lambda_i \}_{i=1}^{d^2-1}= \{u_{jk}, v_{jk}, w_l \}$:
\begin{eqnarray}\label{sudgenerators}
    u_{jk}&=&\ketbra{j}{k}+\ketbra{k}{j}, \ \ v_{jk}= -i(\ketbra{j}{k}-\ketbra{k}{j}), \nonumber\\
    w_l &=& f(l)\qty(\sum_{m=1}^l \ketbra{m}{m} -l\ketbra{l+1}{l+1}),
\end{eqnarray}
where $1\leq j< k \leq d$,  $f(l)=\sqrt{\frac{2}{l(l+1)}}$ \text{and} $1\leq l\leq d-1$. Let ${\cal{H}}^d_X$ be the Hilbert space for a system $X$ with dimension $d$. A state $\rho$ of the system $X$ is a linear operator with the following properties: (i) $\Tr(\rho)=1$, (ii) $\rho^{\dag}=\rho$, (iii) $\rho \geq0$. Let $\mathbb{D}(\mathcal{H}^d_X)$ denote the set of all states of the system $X$. Then the generalized Bloch sphere representation of a state $\rho \in \mathbb{D}(\mathcal{H}^d_X)$ is given as \cite{KIMURA2003,Bertlmann_2008}
\begin{equation}
    \rho=\frac{1}{d}\mathbb{I} +  \vec{\omega}\cdot \boldsymbol{\lambda},\label{state}
\end{equation}
where $\vec{\omega} \in \mathbb{R}^{d^2-1}$ and is the called {\it{Bloch vector}} and $\boldsymbol{\lambda}$ is the vector comprised of $\{\lambda_i \}$. Eq.(\ref{state}) satisfies the first two properties of a state by construction. A necessary condition for positivity of $\rho$ is $|\vec{\omega}|^2 \leq \frac{d-1}{2d}$ which is sufficient for $d=2$. However, in higher dimensions there exist additional conditions on $\vec{\omega}$ to ensure the positivity of $\rho$ \cite{KIMURA2003}.

In the Schr\"odinger picture, states evolve in time while the observables are fixed. The evolution of a quantum state in this picture is given by a linear, completely positive and trace preserving map called quantum channel. Any quantum channel $\mathcal{E}$ acting on a state $\rho$ can be written as $\mathcal{E} (\rho)=\sum_i E_i \rho E_i^\dag$ where $\{E_{i}\}$ are the Kraus operators of the channel such that $\sum_{i}E_{i}^\dag E_{i}=\mathbb{I}$ \cite{CHOI1975285,wilde}. In the Heisenberg picture, however, the states are fixed while the observables evolve in time by the adjoint of the channel $\mathcal{E}$, i.e., $\mathcal{E}^*$. The action of $\mathcal{E}^*$ on an observable $\mathcal{O}$ can be represented as $ \mathcal{E}^* (\mathcal{O})=\sum_{i} E_i^\dag \mathcal{O} E_i$. Note that $\mathcal{E}^*$ is unital because $ \mathcal{E}^* (\mathbb{I})=\sum_{i} E_i^\dag E_i=\mathbb{I}$. The expectation value of an observable calculated in these two pictures is the same as  $\Tr(\mathcal{E}(\rho)\mathcal{O})=\Tr(\rho\mathcal{E}^*(\mathcal{O}))$.

\section{Masking of Observables}
Let us consider a $d$-dimensional system $A$ and its environment $E$ with Hilbert space $\mathcal{H}_A\otimes\mathcal{H}_E$. Let $\mathcal{O}_A$ be an observable of the system $A$ which is by definition a Hermitian operator and can be parametrized as $\mathcal{O}_A = a_0\mathbb{I}+\vec{a}\cdot\boldsymbol{\lambda}$, where $a_0\in\mathbb{R},\vec{a}\in \mathbb{R}^{d^2-1}$. 
Masking of an observable is defined such that the information of the observable is hidden from $A$ but is known globally. By \enquote{information of observable} we mean the degeneracy structure of the observable along with its eigenstates which induces its corresponding measurement setup, i.e., a positive operator valued measure (POVM). Since, this remains unchanged under a transformation of the type $\mathcal{O}\rightarrow \alpha(\mathcal{O}- \beta\mathbb{I}/{d})$. By taking $\alpha= 1/|\vec{a}|$ and $\beta= \Tr(\mathcal{O})$ we can set $a_0 =0$ and $|\vec{a}|=1$. Thus for the purpose of masking, we consider observables of the form $\mathcal{O}_A=\hat{a}\cdot\boldsymbol{\lambda}$. If an observable is fully degenerate, i.e., proportional to identity, it contains no information in the sense that no two states can be distinguished by measuring that observable.
Since masking is a physical process, it can be represented by a unitary $U$ acting on the joint system-environment state $AE$. We now formally define masking of observables.

\begin{definition}
A unitary $U$ acting on $\mathcal{H}_A\otimes\mathcal{H}_E$ is said to mask information of an observable $\mathcal{O}_A$ by mapping it to $\mathcal{O}_{AE}^{'}$ such that the locally evolved observable is fully degenerate,
\begin{equation}
\mathcal{O}'_A=\mathcal{E}^*(\mathcal{O}_{A})=\mathbb{I}, \label{maskcond1}
\end{equation}
Here, $\mathcal{E}^*$ is the local channel acting on the system $A$, corresponding to the unitary $U$.
\end{definition}

The masking operation for states is defined in such a way that the information about the initial state is hidden from both the marginals of the evolved global state. However, in our case the initial observable of the environment $E$ is assumed to be $\mathcal{O}_E=\mathbb{I}$. Now let the local channel corresponding to the masker $U$, that acts on the environment be $\mathcal{F}^*$. Since $\mathcal{F}^*$ is unital, the evolved observable of the environment is $\mathcal{O}'_E=\mathcal{F}^*(\mathbb{I})=\mathbb{I}$, which has no information about $\mathcal{O}_A$. Hence, the above condition is automatically satisfied for the environment. Therefore, in order to define masking of observables it is sufficient to consider the local evolution of only the system observable.

Note that our definition of the masking of observables is not Heisenberg equivalent to that for masking of states in the Schr\"odinger picture.
The masker for states is an isometry $U$ which maps a set $\{\ket{\psi^k}_A\}$ to $\{\ket{\phi^k}_{AB}\}$ such that \cite{modi2018}
\begin{equation}
    \Tr_{A(B)}\ketbra{\phi^{m}}=\Tr_{A(B)}\ketbra{\phi^{n}} ~ \forall~ m, n.
\end{equation}
The crucial difference here is that the information that is masked in the case of states is encoded in the index $k$. However, the information that is masked in the case of observables is contained in their degeneracy structure. Due to this fact, masking can be defined for a single observable, but is only possible for a set of states. \\

We can now rewrite the masking condition in terms of the set of output states of $\mathcal{E}$ as follows. For $\mathcal{O}_A$ to be maskable Eq.(\ref{maskcond1}) implies that $\forall ~ \rho\in\mathbb{D}(\mathcal{H}^{d}_A)$ we have
\begin{eqnarray}
    1 &=& \Tr( \rho\mathcal{E^*}(\mathcal{O}_A)) \nonumber \\ 
    &=& \Tr(\mathcal{E}(\rho)\mathcal{O}_A) \nonumber \\
    &=& \Tr(\vec{b}\cdot\boldsymbol{\lambda}~\hat{a}\cdot\boldsymbol{\lambda}) \nonumber\\
    &\Rightarrow&  \hat{a}\cdot\vec{b}   = \frac{1}{2},\label{maskcond2}
\end{eqnarray}
where $\hat{a}$ and $\vec{b}$ are vectors in $\mathbb{R}^{d^2-1}$ and $\abs{\hat{a}}=1$. The first equality follows by averaging with respect to $\rho$ on both sides of Eq.(\ref{maskcond1}). The second equality is the equivalence of the Schr\"odinger and the Heisenberg picture. The third equality and the rest follow by expanding the observable and the state in the basis $\{\mathbb{I},\boldsymbol{\lambda}\}$ as $\mathcal{O}_A = \hat{a}\cdot \boldsymbol{\lambda}$ and $\mathcal{E}(\rho)= \mathbb{I}/d+ \vec{b}\cdot\boldsymbol{\lambda}$ where $\hat{a}, \vec{b}\in\mathbb{R}^{d^2-1}$ and simplifying thereafter. Eq.({\ref{maskcond2}}) is the masking condition for observables in terms of the set of output Bloch vectors, $S=\{\vec{b} :\mathbb{I}/d+ \vec{b}\cdot\boldsymbol{\lambda} =\mathcal{E}(\rho),\rho\in\mathbb{D}(\mathcal{H}^{d}_A)\}$ and is equivalent to Eq.(\ref{maskcond1}). It says that an observable $\mathcal{O}_A$ can be masked if $\exists$ a set $S$ of output Bloch vectors  such that $\forall~\vec{b}\in S$, Eq.(\ref{maskcond2}) holds.  We will use either Eq.\eqref{maskcond1} or Eq.\eqref{maskcond2} as the masking condition for observables as per convenience.  
Note that any given observable $\mathcal{O}_A$ can be masked in $d=2$ because we can always find a channel $\mathcal{E}$ such that $\vec{b}$ satisfies Eq.(\ref{maskcond2}). Now a natural question arises: Does there exist a universal masker that can mask all observables in $d=2$? We show that this is not possible in the following theorem.
\begin{theorem}
No masker can mask all observables in $d=2$. \label{nomasking2}
\end{theorem}
\renewcommand\qedsymbol{$\blacksquare$}
\begin{proof}
Let there exists a universal masker and $\mathcal{E}^*$ be its local channel mapping any observable $\mathcal{O}_A$ to $\mathbb{I}$. Since the channel $\mathcal{E}^*$ is linear, in order to map all observables of the form $\mathcal{O}_A=\hat{a}\cdot \boldsymbol{\sigma}$  to $\mathbb{I}$, it should be able to map the set $\{\sigma^1,\sigma^2,\sigma^3\}$ to $\mathbb{I}$, where $\sigma^i$ is the $i$-th Pauli matrix. Let us now assume that $\mathcal{O}_A=\sigma^3$, i.e, $\hat{a}=(0,0,1)$, satisfies Eq.(\ref{maskcond1}), then Eq.(\ref{maskcond2}) implies that $b_3=1/2$. Since for $\mathcal{E}(\rho)$ to be positive, $b_1^2+b_2^2+b_3^2\leq b_0^2=1/4$, it then implies that $b_1=b_2=0$, i.e., $\vec{b}=(0,0,1/2)$ and $\mathcal{E}(\rho)=\ketbra{0}~\forall~\rho$. Therefore, any linear combination of $\sigma^1$ and $\sigma^2$, say $a_1\sigma^1+a_2\sigma^2$ cannot satisfy Eq.(\ref{maskcond2}). This proves that if $\sigma^3$ satisfies Eq.(\ref{maskcond1}) for some channel $\mathcal{E}^*$, then  $\sigma^1$ and $\sigma^2$ cannot do so for the same channel. It implies that there does not exist a channel that can map all observables $\mathcal{O}_A$ to $\mathbb{I}$. Therefore, there does not exist a universal masker that can mask all observables in $d=2$.
\end{proof}
Note that in the case of state masking it is always possible to mask a set of distinguishable states by mapping it unitarily to a set of orthogonal entangled states. However, in the case of observables for qubit systems it is impossible to mask even an orthogonal set of observables as shown in the above theorem. In terms of the set of output Bloch vectors  $S$, the above theorem can be rephrased as following: there does not exist a convex subset $S$ of the Bloch sphere such that for each $\mathcal{O}_A$, Eq.(\ref{maskcond2}) holds $\forall$ $\vec{b}\in S$. As discussed before the masking operation hides the information of the observable from the system 
 by mapping it to identity. Since the initial observable of the environment is chosen to be identity and its local evolution is unital, the information of the system observable remains inaccessible to the environment locally. So, a natural question to ask is that where has this information gone and how to retrieve it? We answer this question in the following theorem.
\begin{theorem} \label{infretriev}
For an arbitrary qubit observable $\mathcal{O}_A\in\{\hat{n}\cdot\boldsymbol{\sigma}:\hat{n} \text{ is a unit vector}\}$ under the masking operation, if the information about $\mathcal{O}_A$ is lost from $A$, then it can be retrieved from the environment by an observer who can access the global unitary of system-environment. 
\end{theorem}
\renewcommand\qedsymbol{$\blacksquare$}
\begin{proof}
See Appendix A.
\end{proof}

In physics, symmetry operations lead to conservation laws for physical
systems. For example, if $G$ is the generator of some symmetry operation
and if the latter commutes with the Hamiltonian, then the physical
observable represented by the operator $G$ is conserved. However, since the unitary that implements the masking process for an arbitrary observable $\mathcal{O}_A$ of the type $\hat{n}\cdot\boldsymbol{\sigma}$, does not commute with the observable $\mathcal{O}_A\otimes\mathbb{I}_E$ (see Appendix A for details), the
above theorem suggests a conservation law for observables beyond
the usual symmetry operation.

\section{ Masking in arbitrary dimensions} 
We have seen that any known observable can be masked in $d=2$. We will now show that this is not true for arbitrary dimension $d$. Consider the basis $\{\lambda_i\}_{i=1}^{d^2-1}=\{u_{jk}, v_{jk}, w_l\}$ for operators on $\mathcal{H}^d$ as defined in Eq.\eqref{sudgenerators}. Since $u_{jk}$ and $v_{jk}$ are Hermitian operators with eigenvalues $\{\pm 1,0\}$, they can always be masked by a constant channel. Consider a particular observable from this set $u_{12}$ as an example. It can be masked by a constant channel $\mathcal{E}$ which maps all states $\rho$ to the eigenstate $\ket{\phi}$(say) with $+1$ eigenvalue. This is because $\Tr(\mathcal{E}(\rho)u_{12})=\bra{\phi}u_{12}\ket{\phi}=1~\forall \rho\in \mathbb{D}(\mathcal{H}^{d})$. However, $\{w_l\}$ are not maskable in general. Consider an observable $w_{l}$ from this set. It can be masked if there exists a channel $\mathcal{E}$ such that $\Tr(\mathcal{E}(\rho) w_{l})=1 \ \forall \rho\in \mathbb{D}(\mathcal{H}^{d})$. The expectation value of an operator in any state is upper bounded by its maximum eigenvalue. However, the maximum eigenvalue of $w_{l}$, $\sqrt{\frac{2}{l(l+1)}}<1~~\forall~~l\geq2$. Therefore, the expectation value of $w_{l}$ in any state can never be 1 for $l\geq2$. Hence the set $\{w_l\}$ is not maskable for $l\geq2$.
Let us now define $\mathcal{M}^d$ as the set of all maskable observables in $d$ dimensions. Then we can ask whether there exists a universal unitary that can mask all observables in $\mathcal{M}^d$. In Theorem \ref{nomasking2} we showed that this is not true for $d=2$ which implies that a set of arbitrary observables cannot be masked by a single unitary. We will now show that this result holds for arbitrary dimension.
\begin{theorem} \label{nomaskingd}
A set of arbitrary observables cannot be masked in any dimension $d$ by a single unitary. 
\end{theorem}
\renewcommand\qedsymbol{$\blacksquare$}
\begin{proof}
We will prove the non-existence of a universal masker for arbitrary $d$, using the masking condition \eqref{maskcond2} in terms of the set of output Bloch vectors  $S$. Let $\mathcal{C}_S$ be the set of observables that can be masked by a channel $\mathcal{E}$ with the corresponding set of output Bloch vectors  $S$. Let us assume that $\mathcal{E}$ is the local channel of a universal masker, it is equivalent to saying that there exists a set of output Bloch vectors  $S$ such that $\mathcal{C}_S=\mathcal{M}^d$ is the set of all maskable observables on the Hilbert space $\mathcal{H}^d$. If we now consider a set $S_p=\{\vec{b}\}$ where $\vec{b}$ is some vector in $S$ then it is obvious that $\mathcal{C}_{S}\subseteq\mathcal{C}_{S_p}$. Since $\mathcal{C}_S=\mathcal{M}^d$ and $\mathcal{M}^d$ is the set of all observables, therefore $\mathcal{C}_{S_p}=\mathcal{M}^d$.  Let us now consider a different set $S_{q}=\{\vec{b}'\}$ where $\vec{b}'$ is another vector in the set $S$, then following similar arguments as above we get $\mathcal{C}_{S_{q}}= \mathcal{M}^d$. We will now show that there exists at least one observable $\mathcal{O}=\hat{a}\cdot\boldsymbol{\lambda}\in\mathcal{C}_{S_{q}}$ such that $\mathcal{O}\notin\mathcal{C}_{S_p}$. The masking condition \eqref{maskcond2} gives
    \begin{eqnarray}
       \hat{a}\cdot\vec{b}'-\frac{1}{2}=0.\label{b'masker}
    \end{eqnarray}
   Now the observable $\mathcal{O}$ will also belong to $\mathcal{C}_{S_p}$ if it satisfies the following equation
   \begin{equation}
         \hat{a}\cdot\vec{b}-\frac{1}{2}=0.\label{bmasker}
   \end{equation}
   Taking the difference of the above two equations, we get 
    \begin{eqnarray}
       \hat{a}\cdot(\vec{b} -\vec{b}')=0.
    \end{eqnarray}
   We can always pick a vector $\hat{a}\in\mathbb{R}^{d^2-1}$ such that $\hat{a}\cdot\vec{b}\neq\hat{a}\cdot\vec{b}'$. Specifically, we can always choose $\hat{a}$ that satisfies Eq.\eqref{b'masker} but not Eq.\eqref{bmasker}. Therefore, 
$\mathcal{O}\notin\mathcal{C}_{S_p}$ which implies that 
 $\mathcal{C}_{S_p}\neq\mathcal{M}^d$. Since $\mathcal{C}_{S}\subseteq\mathcal{C}_{S_p}$, therefore $\mathcal{C}_{S}\neq\mathcal{M}^d$. Thus there does not exist a universal masker that can mask all of the observables in the set $\mathcal{M}^d$. This implies that a set of arbitrary observables cannot be masked by a single unitary which completes the proof.
\end{proof}
Note that the physical motivation behind the masking operation allows us to modify \eqref{maskcond1} in such a way that the masking operation maps an observable to $c\mathbb{I}$, $c\in\mathbb{R}$. This will change what observables can be masked and what cannot be masked. However, this will not change the fact that a set of arbitrary observables cannot be masked by a single unitary in any dimension $d$ which is our main physical result. 

Now that we have seen that a single channel cannot mask all observables in $\mathcal{M}^d$ we can ask whether there exists a subset of $\mathcal{M}^d$ that can be masked by a given channel. We will call such a set of observables as {\it comaskable} corresponding to a given channel. From Theorem \ref{nomasking2} and its preceding calculations, we can conclude that in $d=2$ only one observable can be masked by a given channel. This is because in this case, $|\vec{b}|\leq 1/2$ and $\hat{a}\cdot\vec{b}=1/2$ implies that $|\vec{b}|= 1/2$ and $\hat{a}|| \vec{b}$. This fixes the direction of $\hat{a}$ and determines it uniquely. It implies that the set of comaskable observables corresponding to any channel in $d=2$ contains a single observable. However, in higher dimensions, $|\vec{b}|$ can be greater than $1/2$ because the surface of the Bloch ball containing all of the pure states is defined by  $|\vec{b}|^2 = \frac{d-1}{2d}$. This implies that $\hat{a}$ need not be parallel to $\vec{b}$ in order to satisfy the masking condition \eqref{maskcond2}. So, at least in the case of a constant channel whose set of output Bloch vectors  contains a single vector, i.e., $S=\{\vec{b}\}$, there will exist multiple observables that satisfy the masking condition for the same $\vec{b}$. Starting from $\hat{a}$, we can find a comaskable observable $\hat{a}'$ that is infinitesimally close to $\hat{a}$ :
\begin{eqnarray}
    \hat{a}'= \hat{a}+ \epsilon\vec{v},
\end{eqnarray}
where $\vec{v}$ is orthogonal to the plane formed by the linear span of $\{\hat{a},\vec{b}\}$ and $\abs{\epsilon}\ll 1$. We can see that under this infinitesimal transformation the dot product remains unchanged: $\hat{a}\cdot\vec{b}= \hat{a}'\cdot\vec{b}= 1/2$. We can then repeat the above process taking $\hat{a}'$ as the starting vector to find more comaskable observables.
  

\section{ No-bit commitment}
In the quantum bit commitment protocol, Alice commits a bit to Bob, the value of which is revealed to him at a later stage by Alice. The two key properties of any bit commitment protocol are (i) {\it Binding}-- In the revealing phase, Alice cannot change the bit she committed to Bob, and (ii) {\it Concealing}-- Bob cannot identify the bit until she reveals it. The usual protocol runs as follows. Alice encodes her bit $0$ or $1$ in the state $\ket{\Psi^0_{AB}}$ or $\ket{\Psi^1_{AB}}$, respectively, and sends the subsystem $B$ to Bob. Then at a later time she reveals the bit value and sends some information to Bob so that he can verify that Alice was indeed committed to the bit that she later revealed. The (ii) condition implies that the reduced state of $B$ is independent of the bit that Alice commits, i.e., $\rho^0_B=\rho^1_B$. This is equivalent to the following condition $\Tr(\mathcal{O}_B\rho^0_B)=\Tr(\mathcal{O}_B\rho^1_B)~~\forall~\mathcal{O}_B$, i.e., Bob cannot distinguish between $\rho^0_B$ and $\rho^1_B$ by measuring any observable $\mathcal{O}_B$. The (ii) condition also implies that $\ket{\Psi^0_{AB}}=\sum_i \sqrt{\lambda_i}\ket{a^0_i}\ket{b_i}$ and $\ket{\Psi^1_{AB}}=\sum_i \sqrt{\lambda_i}\ket{a^1_i}\ket{b_i}$, where $\rho^0_B=\rho^1_B=\sum_i\lambda_i\ketbra{b_i}$, i.e., the two global states are connected by a local unitary on Alice's system. This is the key to cheating because Alice can change her committed bit by applying a local unitary on her part of the system, thus the (i) condition does not hold. Therefore, the unconditional quantum bit commitment is impossible \cite{Lo_1997,Mayers_1997}. We will now show that the impossibility of quantum bit commitment is a necessary consequence of the no-masking of observables. 

\begin{theorem}
  The no-masking of observables implies the unconditional no-bit commitment. \label{bitmasking}
\end{theorem}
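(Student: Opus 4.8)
The plan is to argue by contradiction: suppose there is a bit-commitment protocol that is simultaneously \emph{concealing} and \emph{binding}, and extract from it a masker for an observable that the no-masking theorem forbids. \textbf{Step 1: concealing is already a masking pattern.} The concealing condition $\rho^0_B=\rho^1_B$ says precisely that $\Tr(\mathcal{O}_B\rho^0_B)=\Tr(\mathcal{O}_B\rho^1_B)$ for every Bob-side observable $\mathcal{O}_B$, i.e.\ no local observable of $B$ carries any imprint of the committed bit even though the global states $\ket{\Psi^0_{AB}}$ and $\ket{\Psi^1_{AB}}$ are generically distinct and distinguishable by a joint measurement --- exactly the situation of Definition~1. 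As already recalled in the text, $\rho^0_B=\rho^1_B$ forces $\ket{\Psi^1_{AB}}=(U_A\otimes\mathbb{I}_B)\ket{\Psi^0_{AB}}$ for some unitary $U_A$ on $\mathcal{H}_A$ (purifying with an Alice-side ancilla if the commitment states are not already pure).

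\textbf{Step 2: build the putative masker.} Fix an arbitrary observable $\mathcal{O}_A$ on $\mathcal{H}_A$. Combining $U_A$ with Bob's reveal-phase verification, I would construct a CPTP map $\mathcal{E}$ acting on Alice's system whose adjoint acts on $\mathcal{O}_A$, and then read the binding requirement --- that no local operation of Alice can push the ``accept'' value for bit $0$ to the ``accept'' value for bit $1$ --- through Eq.~\eqref{maskcond2}: it says that $\mathcal{E}^*(\mathcal{O}_A)$ must take one and the same expectation value on every state in the output set $S$ of $\mathcal{E}$, which by unitality of $\mathcal{E}^*$ is Eq.~\eqref{maskcond1}, $\mathcal{E}^*(\mathcal{O}_A)=\mathbb{I}$. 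Hence a protocol that is concealing against all of Bob's strategies and binding against all of Alice's strategies would realise the masking of $\mathcal{O}_A$. \textbf{Step 3: invoke no-masking.} Now choose $\mathcal{O}_A$ to be one of the provably non-maskable observables supplied by Theorem~1 for $d=2$ (or by the higher-dimensional no-masking theorem), so that Eq.~\eqref{maskcond1} cannot hold; equivalently, note that the construction uses a single channel for every $\mathcal{O}_A\in\mathcal{M}^d$ and would therefore be a universal masker, contradicting Theorem~\ref{theorem2} (respectively Theorem~\ref{unimaskerd}). Either way we reach a contradiction, so no protocol can be both concealing and binding, i.e.\ unconditional bit commitment is impossible.

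\textbf{The main obstacle} is Step~2: turning the informal statement ``binding is violated by the local unitary $U_A$'' into the precise identity $\mathcal{E}^*(\mathcal{O}_A)=\mathbb{I}$. One must name the channel $\mathcal{E}$ explicitly in terms of the protocol data (Alice's retained share, the purifying ancilla, Bob's verification measurement), handle a possible mismatch between $\dim\mathcal{H}_A$ and $\dim\mathcal{H}_B$ as well as mixed commitment states, and make sure the observable that ends up being masked genuinely lies outside $\mathcal{M}^d$ (or that the masker obtained is genuinely universal). Steps~1 and~3 are then essentially bookkeeping on top of the structural results already established above.
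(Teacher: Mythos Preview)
Your Step~2 is indeed the gap, and the direction you chose for it is not the one the paper takes. You try to manufacture a masker on \emph{Alice's} side out of the cheating unitary $U_A$ and the binding condition; but note that your Step~1 already produces $U_A$ from concealing alone, so binding has already failed and you have reproduced the standard Lo--Chau/Mayers contradiction \emph{without} passing through no-masking at all. What is missing is a construction that genuinely uses the masking framework.

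The paper's proof works entirely on \emph{Bob's} side. Assuming an unconditionally secure protocol, binding forces $\rho^0_B\neq\rho^1_B$ (otherwise the purifications are connected by a local unitary on $A$ and Alice cheats), while concealing forces $\Tr(\rho^0_B\mathcal{O}_B)=\Tr(\rho^1_B\mathcal{O}_B)$ for every $\mathcal{O}_B$. From these two data alone the paper writes down an explicit measure-and-prepare channel on $\mathcal{H}_B$,
\[
\mathcal{E}(\sigma)=\Tr(\Pi^0\sigma)\,\rho^0_B+\Tr(\Pi^1\sigma)\,\rho^1_B,\qquad \Pi^0+\Pi^1=\mathbb{I},
\]
and checks that $\Tr(\mathcal{E}(\sigma)\mathcal{O}_B)=\Tr(\rho^0_B\mathcal{O}_B)$ is independent of $\sigma$ for \emph{every} $\mathcal{O}_B$. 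Hence the output set $S$ of $\mathcal{E}$ satisfies Eq.~\eqref{maskcond2} for all observables simultaneously, i.e.\ $\mathcal{E}^*$ is a universal masker, contradicting Theorem~\ref{theorem2}/\ref{unimaskerd}. No $U_A$, no verification measurement, no Alice-side observable enters; binding is used only to guarantee $\rho^0_B\neq\rho^1_B$ so that the channel is not the trivial constant map. Replacing your Step~2 with this measure-and-prepare construction closes the argument immediately.
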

\renewcommand\qedsymbol{$\blacksquare$}
\begin{proof}
  We prove this by showing that the possibility of unconditional quantum bit commitment implies the existence of a universal masker for all observables and then using its contrapositive. We start with a conditional bit commitment protocol which is (i)~imperfectly concealing but (ii)~perfectly binding. (i) means that Bob can only measure a restricted set $\mathcal{Q}$ of observables on his subsystem so that he cannot identify the bit until Alice reveals it, i.e.,  $\Tr(\rho^0_B\mathcal{O}_B)=\Tr(\rho^1_B\mathcal{O}_B)=1~~\forall~\mathcal{O}_B \in \mathcal{Q} $. (ii) implies that there exists $\ket{\Psi^0_{AB}}$ and $\ket{\Psi^1_{AB}}$ such that $(U_A\otimes\mathbb{I}_B)\ket{\Psi^0_{AB}}\neq\ket{\Psi^1_{AB}}~\forall~ U_A$ so that Alice cannot cheat once she has committed a bit to Bob. Now given $\rho^0_B$ and $\rho^1_B$ we can always construct a channel $\mathcal{E}$ such that $\forall~\sigma\in \mathbb{D}(\mathcal{H}^d_B)$
\begin{equation}
    \mathcal{E}(\sigma)=\sum_{i=0,1}\Tr(\Pi^i\sigma)\rho^i_B,\label{universalmasker}
\end{equation}
where $\Pi^0=\ketbra{0}$ and $\Pi^1=\mathbb{I}-\Pi^0$ are the elements of a POVM. The channel measures $\sigma$ in the POVM $\{\Pi^i\}$ and prepares $\rho^i_B$ for the $i$-th outcome. The Kraus operators of $\mathcal{E}$ are $E_{ij}=\sqrt{p^i_j}\ketbra{e^{i}_j}{i}$, where $\rho^i_B=\sum_j p^i_j \ketbra{e^i_j}$ and $\bra{e^i_j}\ket{k}\neq\delta_{jk}$. Now averaging both sides of Eq.(\ref{universalmasker}) with respect to $\mathcal{O}_B$ we get
\begin{eqnarray}
  \Tr(\mathcal{E}(\sigma)\mathcal{O}_B)&=&\Tr[\{\Tr(\Pi^0\sigma)\rho^0_B+\Tr(\Pi^1\sigma)\rho^1_B\}\mathcal{O}_B]\nonumber\\
  &=&\Tr(\Pi^0\sigma)\Tr(\rho^0_B\mathcal{O}_B)+\Tr(\Pi^1\sigma)\Tr(\rho^1_B\mathcal{O}_B)\nonumber\\
  &=&\Tr(\rho^0_B\mathcal{O}_B).
\end{eqnarray}
 Thus, $\Tr(\mathcal{E}(\sigma)\mathcal{O}_B)=1 ~~\forall~\mathcal{O}_B \in \mathcal{Q}$ and $\forall~\sigma\in\mathbb{D}(\mathcal{H}^d_B)$. Using the making condition (\ref{maskcond2}) and the discussion in the preceding section, we can conclude that $\mathcal{Q}$ is the set of comaskable observables corresponding to the channel $\mathcal{E}$. Therefore, for the unconditional bit commitment protocol, $\mathcal{Q}$ is the set of all observables. This implies that the channel $\mathcal{E}$ can mask all observables which contradicts Theorem \ref{nomaskingd}. Hence, the no-masking of observables implies the unconditional no-bit commitment.
\end{proof}

\section{ Conclusions}
We used the concept of degeneracy structure to define the information content of observables. So a fully degenerate observable contains no information. Then we defined the notion of masking of an observable which means that the information of the observable is hidden from both the system and the environment locally but is preserved globally. We show that it is impossible to mask a set of arbitrary observables in any dimension, even though a known observable can always be masked for a qubit system. This shows that physical attributes such as energy, spin, angular momentum, charge, and so on, cannot be made blind to the system itself, where the physical observable evolves under a general transformation and gets mapped to identity. However, it remains to be seen whether the approximate masking of a set of observables by a single unitary is possible, where the observables are mapped infinitesimally close to identity. Furthermore, for the case of qubit we find that if the observable is mapped to identity, then it simply gets swapped to the ancillary system up to a local unitary. This shows that after the masking operation the information of an observable can be retrieved from the environment itself by an observer who has knowledge of the global evolution which can be interpreted as conservation of physical observable in composite systems. Thus, physical attributes of a quantum system can never be lost. If it is lost from one subsystem, then it may be found in another subsystem which is nothing but a conservation law beyond symmetry transformations. An extension of this work would be to see if the same result holds in $d$ dimensional systems as well. We also introduced the notion of comaskable observables as a set of observables that can be masked by a given channel. We provide a prescription to construct comaskable observables in $d>2$ for a constant channel. It would be interesting to see if these sets can be fully characterized for a given channel in arbitrary dimensions. As an application of our result we prove that the famous no-bit commitment is a necessary consequence of the no-masking theorem for observables. These results can have deep impact on how we manipulate observables and what kind of limitations may be imposed on them. Finally, we believe that our results will have important applications in thermalization of observables, information scrambling as well as variety of other areas in quantum information and quantum communication in future.

\appendix
\section{Proof of theorem 2}
\renewcommand\qedsymbol{$\blacksquare$}
\begin{proof}
Let us first consider the masking operation for $\sigma^3$. There exists a channel $\mathcal{E}^*$ such that $\mathcal{E^*}(\sigma^3)=\mathbb{I}$, i.e., $\sum_{i}E_{i}^{\dagger}\sigma^3 E_{i}=\sum_i\ketbra{i}{i}$. It can be easily seen that $E_i=\ketbra{0}{i}$ for $i=\{0,1\}$ are the Kraus operators of the channel. We now construct the unitary $U$ that acts on the global observable $\mathcal{O}_{AE}=\sigma^3\otimes\mathbb{I}$ and whose local channel $\mathcal{E}^*$ maps $\sigma^3$ to identity. Since $\mathcal{E}^*$ and $\mathcal{E}$ both correspond to the same unitary, we work with $\mathcal{E}$ for simplicity. To construct the unitary $U$ we will first isometrically extend the channel $\mathcal{E}$ \cite{stinespring1955, wilde}. Any quantum channel can be isometrically extended using its Kraus operators $\{E_i\}$ as follows
\begin{eqnarray}
V_{A\rightarrow{AE}}\equiv\sum_{i}E_{i}\otimes v\ket{i}_E,
\end{eqnarray}
where $v$ is a local isometry acting on the environment. Since, the Kraus operators of the channel in our case are $E_i=\ketbra{0}{i}$, the isometric extension of $\mathcal{E}$ is given by
\begin{eqnarray}
V=\sum_{i}\ketbra{0}{i}\otimes v\ket{i}_E.\label{isometry}
\end{eqnarray}
Now we know that an isometry is part of a unitary on a larger system. Therefore, to get this unitary we have to define its action on the joint system-environment, i.e., $AE$ state. Using Eq.(\ref{isometry}) we see that the isometry $V$ acts on the state $\ket{j}_A$ of the system $A$ as
\begin{eqnarray}
V\ket{j}_A=\ket{0}_A\otimes v\ket{j}_E.
\end{eqnarray}
We now assume that the initial state of the environment is $\ket{0}_E$ and then define the action of $U$ in this case as
\begin{eqnarray}
U(\ket{j}_A\otimes\ket{0}_E)=\ket{0}_A\otimes u_0\ket{j}_E, \label{unidef1}
\end{eqnarray}
where we used the local unitary freedom $u_0$ in defining the action of $U$. To fully specify the unitary we have to show how it evolves the joint system-environment state when the initial state of the environment is $\ket{1}_E$. We choose the evolution in such a way that the overall interaction is unitary
\begin{eqnarray}
U(\ket{j}_A\otimes\ket{1}_E)=\ket{1}_A\otimes u_1\ket{j}_E,\label{unidef2}
\end{eqnarray}
where $u_1$ is a unitary acting on the environment. We can now write the full unitary using Eqs.(\ref{unidef1}) and (\ref{unidef2}) as follows
\begin{eqnarray}
U=\sum_{ij}\ketbra{j}{i}\otimes u_j\ketbra{i}{j},
\label{unitaryappendix}
\end{eqnarray}
which acts as the swap operator for $\sigma^3\otimes\mathbb{I}$.\\
Now consider an arbitrary observable $\mathcal{O}_A= \hat{a}\cdot \boldsymbol{\sigma}$ for some unit vector $\hat{a}$. Let us now fix a coordinate system $(\hat{x},\hat{y},\hat{z})$. There exists a proper rotation $R$ such that $\hat{a}= R \hat{z}$. This rotation $R$ corresponds to a unitary $w$ acting on $\hat{z}\cdot\boldsymbol{\sigma}=\sigma^3$, so $\mathcal{O}_A$ can be written as 
\begin{equation}
    \mathcal{O}_A=w^{\dag}\sigma^3 w.
\end{equation}
Then the masking condition implies that for some channel $\mathcal{E'}^*$
\begin{eqnarray}
    \mathcal{E'}^*(\mathcal{O}_A)&=& \sum_i E'^{\dag}_i(w^{\dag}\sigma^3 w)E'_i
    =\sum_iE_i^\dag\sigma^3 E_i=\mathbb{I},
\end{eqnarray}
where $E_i=w E'_i$ are the Kraus operators of $\mathcal{E}^*$. Therefore, the Kraus operators of the channel $\mathcal{E'}^*$ are $E'_i= w^\dag E_i$. It can be easily seen now that the unitary extension of the channel $\mathcal{E'}^*$ is
\begin{equation}
    U'=(w^\dag \otimes \mathbb{I})U.
\end{equation}
We will now argue that the masker $U'$ obtained above for $\mathcal{O}_A$ is unique up to a local unitary on the environment. The fact that $U'$ is unique for $\mathcal{O}_A$ up to a local unitary on the environment means that the channel $\mathcal{E'}^*$ that maps $\mathcal{O}_A$ to $\mathbb{I}$ is unique for $\mathcal{O}_A$. To see the uniqueness of $\mathcal{E'}^*$ consider its adjoint that acts on states as follows
\begin{eqnarray}
    \mathcal{E'}(\rho)&=&\sum_i E'_i\rho E'^{\dag}_i =\sum_i w^{\dag}E_i\rho E^{\dag}_i w,
\end{eqnarray}
where $E_i$ are the Kraus operators of any channel $\mathcal{E}$ whose adjoint maps $\sigma^3$ to $\mathbb{I}$, and are not assumed to be of the form $E_i=\ketbra{0}{i}$.
Since, we know from the proof of Theorem 1 that $\sum_i E_i\rho E^{\dag}_i=\mathcal{E}(\rho)=\ketbra{0}~\forall~\rho$. Therefore, $\mathcal{E'}(\rho)=w^{\dag}\ketbra{0}w~\forall~\rho$, i.e., $\mathcal{E'}$ is a constant channel and hence is unique. Therefore, its adjoint channel $\mathcal{E'}^*$ is also unique which in turn implies that the masker $U'$ for $\mathcal{O}_A$ is unique up to a local unitary on the environment.\\
Now the action of this masker on the joint system-environment observable is
 \begin{equation}
     U'^{\dag}(\mathcal{O}_A \otimes \mathbb{I}) U' = \mathbb{I} \otimes \sigma^3=\mathbb{I}\otimes w\mathcal{O}_A w^\dagger.
 \end{equation}
The initial observable $\mathcal{O}_A$ can be retrieved from the environment by the action of a local unitary $w$ on it. Note that knowing $w$ alone is not sufficient to retrieve the initial observable from the environment, we need to know the global unitary $U$ as well. This proves our theorem.
\end{proof}

\bibliography{ref.bib}

\end{document}